
\documentclass[journal,transmag]{IEEEtran}
\ifCLASSINFOpdf
\else
\fi
\hyphenation{op-tical net-works semi-conduc-tor}

\usepackage{amsmath, amsthm, amscd, amsfonts, epsfig, amssymb, longtable, soul, booktabs, cuted}
\usepackage[utf8]{inputenc}

 \newtheorem{theorem}{Theorem}
\newtheorem{lemma}{Lemma}

\newtheorem{proposition}{Proposition}
\newtheorem{definition}{Definition}

\newcommand{\rdt}{$R_{(d,T, b)}$}
\newcommand{\rdts}{$R_{(d,T, b)}$ }
\newcommand{\dtbs}{$(d,T,b)$-$MP$ }
\newcommand{\dtb}{$(d,T,b)$-$MP$}

\begin{document}
%
\title{Assessing the performance of smart grid communication networks under both time and budget constraints}


\author{\IEEEauthorblockN{Majid Forghani-elahabad\IEEEauthorrefmark{},
\IEEEauthorblockA{\IEEEauthorrefmark{}Center of Mathematics, Computing, and Cognition - Federal University of ABC, Santo André, SP, Brazil}}
\thanks{Manuscript received ...; revised .... 
Corresponding author: M.~Forghani-elahabad (email:~m.forghani@ufabc.edu.br, phone: +55(11)49968330).}}

%



\IEEEtitleabstractindextext{%
\begin{abstract}
The smart grid concept has emerged to address the existing problems in the traditional electric grid, which has been functioning for more than a hundred years. The most crucial difference between traditional grids and smart grids is the communication infrastructure applied to the latter. However, coupling between these networks can increase the risk of significant failures. Hence, assessing the performance of the smart grid communication networks is of great importance and thus is considered here. As transmission time and cost play essential roles in many real-world communication networks, both time and budget constraints are considered in this work. To evaluate the performance of communication networks, we assume that the data is transmitted from a source to a destination through a single path. We propose an algorithm that computes the exact probability of transmitting $d$ units of data from the source to the destination within $T$ units of time and the budget of $b$. The algorithm is illustrated through a benchmark network example. The complexity results are also provided. A rather large-size benchmark, that is, Pan European topology, along with one thousand randomly generated test problems are used to generate the experimental results which show clearly the superiority of our proposed algorithms to some existing algorithm in the literature.
\end{abstract}

\begin{IEEEkeywords}
Communication network reliability, Smart grids, Minimal paths, Performance index, Algorithms.
\end{IEEEkeywords}}

\maketitle

\IEEEdisplaynontitleabstractindextext

%
\IEEEpeerreviewmaketitle

\subsection*{Acronyms}
\begin{tabular}{ll}	
	SGCN&Smart Grid Communication Network\\
	MFN&Multistate Flow Network\\
	MP&Minimal path\\
	SSV&System State Vector
\end{tabular}
\subsection*{Nomenclature}
\begin{itemize}	
	\item For two SSVs $ X = (x_1, x_2, \cdots, x_m) $ and $ Y = (y_1, y_2, \cdots, y_m) $, we say that $ X\leq Y $ if $ x_i\leq y_i $, for any $ i = 1, 2, \cdots, m $. And $ X<Y $ when $ X\leq Y $ and there exists at least one $ 1\leq j\leq m $ such that $ x_j<y_j $. For instance, assuming $ X=(2, 3, 1)$, $Y = (1, 3, 1)  $ and $ Z=(2, 2, 2) $, we have $ Y< X $, $ Y\nless Z $ and $ Z\nless Y $.
	\item An SSV, say $ X\in \Psi $, is said to be a minimal vector in $ \Psi $ if there is no any $ Y\in \Psi $ such that $ Y<X $. 	For instance, all the vectors in $ \{$(1, 2, 3), (2, 3, 1), (3, 2, 1)$\} $ are the minimal vectors.
\end{itemize}

\subsection*{Assumptions}
\begin{enumerate}
	\item The capacity of arc $ a_i\in A $ takes random integer values from $ \{0, 1, \cdots, M_i\} $ according to a given probability distribution function, for $ i = 1, 2, \cdots, m $.
	\item The arcs' capacities are statistically independent one from the other arc.
	\item Every node is perfectly reliable, i.e., deterministic.
	\item Flow in the network satisfies the flow conservation law.
	\item The data is transmitted from a source node to a destination node through a single path.
	\item All the minimal paths of the network are given in advance.
\end{enumerate}

\section{Introduction}
%
%
%
%

%

\IEEEPARstart{A}{} communication network in a smart grid is a network of networks that may utilize several different
communication technologies supported by two main communications media, that is, wired and 
wireless, to be expanded deep into the distribution grids~\cite{bakken2014smart}, \cite{wen2014form}. Thus, the communication network will be responsible for gathering and routing data, monitoring every node, and acting upon the received data. On the one hand, wireless communications have some advantages over wired technologies, such as low-cost infrastructure and ease of connection to difficult or inaccessible areas. However, the nature of the transmission path may cause the signal to attenuate~\cite{gungor2010opportunities}. 
On the other hand, wired solutions generally do not have interference problems, and their functions are not dependent on batteries, as wireless solutions often do~\cite{wen2014form}. 

A communication network is usually modeled in one of the two following ways:
(1) The network components have a binary behavior, i.e., they can only have a fully
working (operational) or wholly failed state. In this case, the network also has two states, and they
are called binary-state networks.
(2) The components, and consequently the network, can have more than two states. In this case,
the network is called a stochastic-flow network, or a multistate flow network (MFN)~\cite{forghani2019iise}. Generally, to model a smart grid communication network as an MFN, all the components in the network that should communicate with the other ones such as smart meters, appliances, or control centers can be considered as a node, and each communication link between two nodes can be considered as an arc. 

In a communication network, we may have one or more of the three kinds of two-way
communication: (1) \textit{two}-terminal, (2) $ k $-terminal, and (3) \textit{all}-terminal communication. 
The \textit{two}-terminal 
communication is a case in which two specific components communicate with each
other, like the communication between each smart meter and the corresponding center. 
The $ k $-terminal
communication is a case in which a component communicates with $ k $ other components, like the
communications between appliances at home and a smart meter. 
The \textit{all}-terminal
communication is a case where all the network components communicate to each other, like the communications between substations in a smart grid. 
The focus of this work is on the first case, that is, the \textit{two}-terminal communication.

Several indexes have been proposed in the literature to assess the performance of the MFNs such as communication networks, transportation networks, and so forth. System reliability is one of the most attractive indexes and indeed of great importance.
Many researchers have worked on the issues related to the reliability and Quality of Service (QoS) of smart grid communication networks. Wang et al.~\cite{wang2010reliability} proposed a reliability evaluation approach to analyze the reliability of Wide-Area Measurement Systems (WAMS). The method incorporates Markov modeling and state enumeration techniques and covers the backbone communication network in WAMS. Gungor et al.~\cite{gungor2010opportunities} discussed communication technologies and requirements for smart grids and presented some QoS mechanisms for smart grids. The authors also introduced some standards and provided a better understanding of the smart grid and related technologies. 

Generally, the reliability is considered as the ability to execute a defined function under specified
conditions for a known period of time~\cite{shier1991network}. Availability in a repairable communication network or reliability in a non-repairable network can be measured as the fraction of time that a specific service is available, as the fraction of data packets that successfully attain the planned destination, or generally as the probability of the existence of a demanded connection~\cite{hines2014smart}. 
An important issue in the smart grid communication network, from the point of view of quality management, is lessening the transmission time or sending data within a given time. Therefore, many researchers have considered the time constraints in assessment of such networks~\cite{forghani2015ress}, \cite{lin2010reliability}, \cite{forghani2015ieee}, \cite{yeh2015fast}. Furthermore, transmission cost and limited budget are other key issues that play important roles in the real-world systems. 
Therefore, the \textit{two}-terminal reliability index in this work, denoted by \rdt, is defined as the probability of transmitting $ d $ units of data from a source to a destination through a single path in the network within $ T  $ units of time and budget $ b $.
Many algorithms have been proposed in the literature to compute the network reliability with or without the time and budget constraints which are usually based on the minimal cuts~
\cite{forghani2013ijor}, \cite{forghani2014ieee}, \cite{forghani2016amm}, \cite{forghani2019ijrqse}, \cite{forghani2019jcs}, \cite{niu2017evaluating-MC}, \cite{niu2019new-MC}, \cite{mansourzadeh2014comparative}, \cite{yeh2008fast, yeh2015new, huang2019reliability, bai2018reliability, datta2019evaluation} 
or minimal paths~
\cite{forghani2019iise}, \cite{forghani2015ress}, \cite{lin2010reliability}, \cite{forghani2015ieee}, \cite{chen2017search}, \cite{forghani2017ieee}, \cite{forghani2019assa}, \cite{jane2017distribution}, \cite{lin1995reliability}, \cite{lin2003extend}, \cite{niu2020capacity}, \cite{niu2020finding-MP}, \cite{yeh2018fast-MP}, \cite{yeh2019new-MP}, \cite{yeh2021novel}, \cite{forghani2020ijor}. Although many studies have been made so far in this regard, the problem of assessing the reliability of communication networks is a well-known NP-hard combinatorial problem~\cite{levitin2007computational}, and hence the research continues.
This work proposes an algorithm to compute the exact amount of \rdts for a smart grid communication network. We illustrate the algorithm through an example and compute the complexity results. Then, several numerical results are conducted through known large-size benchmark and one thousand randomly generated test problems to show the practical superiority of our proposed algorithm in comparison with some available algorithm in the literature. 
The rest of this work is organized as follows. Section~\ref{sec-preliminaries} states the required notations and basic results on the problem. The proposed algorithm is given in Section~\ref{sec-algorithm}. We provide the illustrative example and complexity results in Section~\ref{sec-example-complexity}. Section~\ref{sec-experimental results} conducts several numerical results to show the practical efficiency of the proposed algorithm. Finally, we conclude the work in Section~\ref{sec-conclusions}.

\section{Network model and basic results}\label{sec-preliminaries}
\subsection{Preliminaries}
An architecture of a smart grid communication network is depicted in Fig.~\ref{fig1:smartgrid-architecture} taken from~\cite{powergrid}. One observes that the corresponding communication network to the smart grid given in Fig.~\ref{fig1:smartgrid-architecture} is the one shown in Fig.~\ref{fig2:communication-network}. This network has $ 10 $ nodes and $ 17 $ arcs which can be considered a multistate flow network (MFN). It shows how one can study a smart grid communication network (SGCN) as an MFN. From now on, we say network instead of SGCN or MFN.

\begin{figure*}
	\centering
	\includegraphics[width=.75\linewidth]{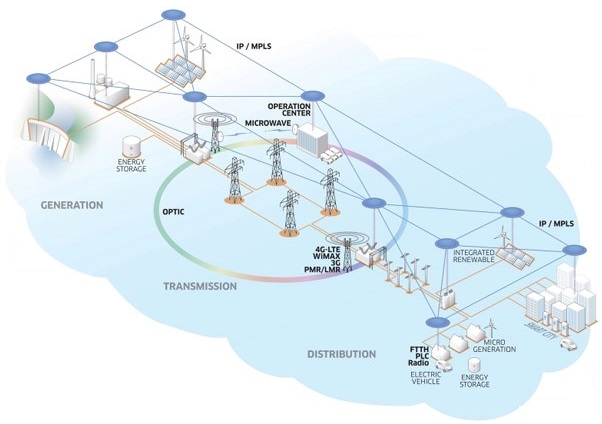}
	\caption{An architecture of a smart grid communication network taken from~\cite{powergrid}.} \label{fig1:smartgrid-architecture}
\end{figure*}

\begin{figure}
	\centering
	\includegraphics[width=.75\linewidth]{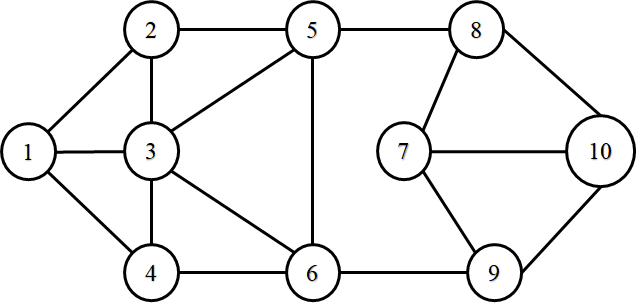}
	\caption{The corresponding communication network to the smart grid of Fig.~\ref{fig1:smartgrid-architecture}.} 
	\label{fig2:communication-network}
\end{figure}
Let $ G(N, A, M, L, C) $ be a network with the set of nodes $ N= \{1, 2, \cdots , n\} $, the set of arcs $ A = \{a_1, a_2, \cdots, a_m\} $, the maximum capacity vector $ M = (M_1, M_2, \cdots, M_m) $  in which $M_i$ is an integer-valued number and denotes the maximum capacity of arc $a_i$, for $i=1, 2, \cdots, m$, the lead time vector $L=(l_1,l_2,\cdots,l_m)$ in which $l_i$ is an integer-valued number and denotes the lead time of arc $a_i$, for $i=1, 2, \cdots, m$, and the cost vector $ C = (c_1, c_2, \cdots, c_m) $ with $ c_i $ denoting the transmission cost of $ a_i $ for sending each unit of data, for $ i = 1, \cdots, m $. Hence, $ n$ and $m $ are respectively the number of nodes and arcs in the network. Let also the nodes $ 1 $ and $ n $ be  respectively the source and destination (sink) nodes. For example, Fig.~\ref{fig3:benchmark} depicts a network with $N = \{1, 2, 3, 4, 5\}$ and $ A  = \{a_1, a_2, \cdots, a_8\} $. One may consider $M=(5, 3, 4, 3, 2, 4, 5, 3)$, $L=(2, 2, 3, 1, 2, 2, 3, 1)$ and $ C = (1, 2, 2, 1, 3, 2, 3, 4) $ as the maximum capacity, lead time and transmission cost vectors in this network. Hence, for instance, at any time no more than $ 5 $ units of data can be sent simultaneously through arc $ a_1 $ as $ M_1=5 $. Moreover, as $ l_1=2 $, it takes two units of time to transmit data from the initial node of $ a_1 $, that is, $ 1 $, to its terminal node, that is, $ 2 $. Also, transmitting any units of data on $ a_1 $ will cost $ 1 $ unit of currency as $ c_1=1 $. 
Although the lead time and transmission cost of any arc is fixed, the capacities of the arcs are not fixed and may vary due to the operations activities, failures, maintenance needs, and so forth. Hence, the current capacity of arc $a_i$ is denoted with $x_i$ which takes values from $\{0,1,\cdots,M_i\}$ according to a given probability distribution function, for $i=1, 2, \cdots, m$, and thus $X=(x_1, x_2, \cdots, x_m)$ denotes the current capacity vector of the network. For instance, $X^\star = (3, 3, 4, 2, 2, 4, 3, 3)$ can be a current SSV  for Fig.~\ref{fig3:benchmark} under which for example a maximum of $ 4 $ units of data can be sent through arc $ a_3 $ per unit of time. Moreover, it takes $ l_3=3 $ units of time to transmit these $ 4 $ units of data from node $ 1 $ to node $ 3 $ through this arc.
Now, if we want transmit $10$ units of data from nodes~$1$ to $3$ on $ a_3 $ under $ X^\star $, it needs a budget of $b= 10\times 2 = 20 $ and will take $3+\lceil 10/4\rceil = 6$ units of time, where $ \lceil \alpha \rceil $ is the smallest integer number not less than $ \alpha $. Starting the transmission from node~$1$,  $x_3=4$ units of data can be pumped into $ a_3 $ per unit of time and would take three units of time for the first data arrives at the terminal node. Afterward, $x_3=4$ units of data will be pumped out from the terminal node each unit of time, and consequently it will take $\lceil 10/4\rceil = 3$ units of time since then until all of the required data is transmitted to the terminal node. Therefore, it takes totally $3+\lceil 10/4\rceil = 6$ units of time to complete the transmission. As a result, the transmission time and cost for sending $ d $ units of data on $ a_i $ with the current capacity of $ x_i $, lead time of $ l_i $ and transmission cost of $ c_i $ are respectively equal to
\begin{align}\label{arc-lead-time}
	t= l_i+\lceil \frac{d}{x_i}\rceil &&\text{and} &&c = d\times c_i.
\end{align}

\begin{figure}
	\centering
	\includegraphics[width=0.5\linewidth]{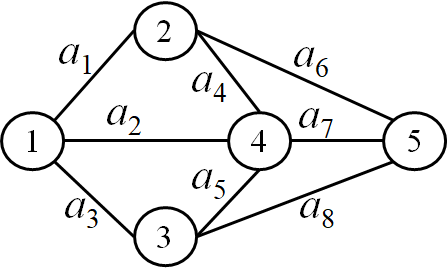}
	\caption{A benchmark network example taken from~\cite{forghani2019ijrqse}.} 
	\label{fig3:benchmark}
\end{figure}
 Pan-European-Network
A path is a set of adjacent arcs through which the data can be transmitted from the source node~to the destination node, and a minimal path (MP) is a path whose proper subsets are not path anymore. For instance, in Fig.~\ref{fig3:benchmark}, $ P =\{a_1, a_4, a_2, a_3, a_8\} $  is a path from node $ 1 $ to node $ 5 $ but not an MP because its proper subset, that is, $ P' = \{a_3, a_8\} $, is still a path. However, $ P' = \{a_3, a_8\} $ is an MP. Without loss of generality, let $ P_j=\{a_1, a_2, \cdots, a_{nj}\} $ be a path in the network with current SSV of $ X = (x_1, x_2, \cdots, x_m) $. The capacity of $ P_j $ under $ X $ is equal to 
\begin{equation*}
KP_j(X)~=~\min\{x_{1}, x_{2}, \cdots, x_{nj}\},
\end{equation*} 
the lead time of this path is equal to 
\begin{equation*}
	LP_j = \sum_{r=1}^{nj} l_{r},
\end{equation*}
and finally the transmission cost of this path per unit of data is equal to 
\begin{equation*}
	CP_j = \sum_{r=1}^{nj} c_{r}.
\end{equation*}
For instance, considering $X^\star~=~(3, 3, 4, 1, 2, 1, 2, 2)$, $L~=~(2, 2, 3, 1, 2, 2, 3, 1)$ and $ C~=~(1, 2, 2, 1, 3, 2, 3, 4) $ for the given network in Fig.~\ref{fig3:benchmark}, for $ P_1 = \{a_1, a_4, a_7\}$, one has $ KP_1(X^\star) = 1 $, $ LP_1 = 6 $ and $ CP_1 = 5 $.
\begin{lemma}
	If $ P $ and $ P' $ are two paths in a network such that $ P\subset P' $, then:
	
	1) The lead time of $ P' $ is greater than $ P $,
	
	2) the transmission cost of $ P'  $ is greater than $ P $, and
	
	3) the capacity of $ P' $ under any SSV of $ X $ is less than $ P $.
\end{lemma}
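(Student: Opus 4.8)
The plan is to reduce all three claims to a single structural fact: since $P\subset P'$ is a \emph{proper} inclusion of arc sets, the set $D:=P'\setminus P$ of extra arcs is nonempty. Notice that the path structure itself plays no role here; only the inclusion of the underlying arc sets matters, so I would work directly with $D$ and the disjoint decomposition $P'=P\cup D$.

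For item~1, I would split the defining sum of the lead time along this decomposition:
\begin{equation*}
LP' \;=\; \sum_{a_r\in P'} l_r \;=\; \sum_{a_r\in P} l_r + \sum_{a_r\in D} l_r \;=\; LP + \sum_{a_r\in D} l_r .
\end{equation*}
Because each arc requires at least one unit of time to be traversed, every $l_r$ is a positive integer, the trailing sum is strictly positive, and hence $LP'>LP$. Item~2 follows by the identical argument after replacing the lead-time vector by the cost vector and using positivity of the per-unit costs, giving $CP'=CP+\sum_{a_r\in D} c_r>CP$.

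For item~3 the aggregation operator is a minimum rather than a sum, so I would instead invoke the elementary fact that enlarging the index set cannot increase a minimum:
\begin{equation*}
KP'(X) \;=\; \min_{a_r\in P'} x_r \;=\; \min\Bigl\{\, \min_{a_r\in P} x_r,\; \min_{a_r\in D} x_r \,\Bigr\} \;=\; \min\{KP(X),\; \min_{a_r\in D} x_r\} \;\le\; KP(X).
\end{equation*}

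The only genuinely delicate point, and the one I would be most careful about, is strictness. For items~1 and~2 the strict inequality is real and rests squarely on the positivity of the lead times and per-unit costs of the arcs in $D$. Item~3, however, only yields a \emph{non-strict} inequality: equality $KP'(X)=KP(X)$ occurs exactly when a bottleneck (minimum-capacity) arc of $P'$ already lies in $P$. Consequently I would read ``less than'' in item~3 as ``at most'' and phrase that part as $KP'(X)\le KP(X)$, rather than assert a strict inequality that can fail.
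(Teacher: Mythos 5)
Your proof is correct and follows essentially the same route as the paper, which simply asserts that the claims are ``directly deduced from the definitions'' because $P'$ contains all arcs of $P$ plus possibly more; you have merely written out the decomposition $P'=P\cup D$ and the sum/minimum splittings that the paper leaves implicit. The one substantive point you add is also a valid one: your observation that item~3 can only be guaranteed as $KP'(X)\le KP(X)$ (with equality when a bottleneck arc of $P'$ already lies in $P$) is a genuine correction to the lemma as stated, and your remark that the strictness in items~1 and~2 rests on positivity of the lead times and per-unit costs (which the paper assumes only implicitly, via its examples) is likewise a fair sharpening of what the paper glosses over.
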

\begin{proof}
As $ P' $ contain all the arcs of $ P $ and may have some other arc that does not belong to $ P $, the result directly is deduced from the definitions.
\end{proof}
\begin{proposition}
	For any path $ P $ in a network there is an MP, say $ P_1 $, such that $ P_1\subset P $.
\end{proposition}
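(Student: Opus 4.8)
The plan is to obtain $P_1$ by a minimality argument over the finite family of paths that sit inside $P$. First I would collect all paths contained in $P$, namely $\mathcal{S}=\{Q : Q\subseteq P \text{ and } Q \text{ is a path from node } 1 \text{ to node } n\}$. This family is nonempty, since $P$ is itself a path and $P\subseteq P$, so $P\in\mathcal{S}$. Moreover, because the whole network has only $m$ arcs, $P$ is a finite set of arcs, and hence $\mathcal{S}$ is a finite collection.

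Next I would pick from $\mathcal{S}$ an element with the fewest arcs and call it $P_1$; such a minimizer exists because $\mathcal{S}$ is finite and nonempty. The claim is that this $P_1$ is an MP. Suppose, toward a contradiction, that $P_1$ were not an MP. By the definition of a minimal path, some proper subset $P''\subsetneq P_1$ would still be a path. But then $P''\subseteq P_1\subseteq P$, so $P''\in\mathcal{S}$, while $P''$ has strictly fewer arcs than $P_1$, contradicting the choice of $P_1$ as a path of minimum size in $\mathcal{S}$. Hence no such $P''$ exists, and $P_1$ is an MP contained in $P$.

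I would flag one point of care about the statement as written. The conclusion is phrased with the strict inclusion $P_1\subset P$, yet if $P$ is already minimal then the only MP inside $P$ is $P$ itself, so the correct universal reading is $P_1\subseteq P$, with strict containment holding precisely when $P$ is not itself an MP. The argument above yields this sharper form at no extra cost: the minimizing step returns $P$ when $P$ is already minimal and returns a proper subpath otherwise. The only genuine subtlety is this equality-versus-strict-inclusion issue; the existence part is immediate from finiteness, and I would expect no difficulty beyond invoking the definition of MP to close the contradiction.
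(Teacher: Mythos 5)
Your proof is correct and is essentially the paper's argument: the paper descends iteratively through proper sub-paths until finiteness forces termination at an MP, whereas you package the same finiteness argument as choosing a minimum-cardinality path contained in $P$ and closing with a contradiction, which is a tidier but not genuinely different route. Your remark that the conclusion should read $P_1\subseteq P$ rather than $P_1\subset P$ (with equality exactly when $P$ is itself an MP) is well taken; the paper's own proof opens with ``if $P$ is an MP there is nothing to prove,'' so it too implicitly treats the inclusion as non-strict.
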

\begin{proof}
	If $ P $ is an MP, then there is nothing left to prove. If not, then it has a proper subset $ P_1 $, which is still a path. If $ P_1 $ is an MP, the proof is complete. If not, one may repeat this argument. As the number of arcs in any path is limited, this repetition stops at some MP, completing the proof.
\end{proof}
The lemma and proposition above show that the paths with minimum transmission cost, maximum capacity, or best lead time are indeed among the MPs of the network. Therefore, we turn our attention to the MPs in the rest of this work.

\subsection{Basic results}
Here, some results are given based on which an algorithm is proposed in the following section to address the problem.

Let $ P_1, P_2, \cdots, P_q $ are all the minimal paths in the network, and hence $ q $ is the number of MPs. Following a similar argument, which was used for an arc in the preceding section, one can conclude that the transmission time to send $ d $ units of data through MP, $P_j= \{a_{j_1}, a_{j_2}, \cdots, a_{j_{nj}}\}$ under the SSV of $X$ is equal to
\begin{equation}\label{path-lead-time-formula}
	\xi(d, X, P_j)=LP_j+\lceil \frac{d}{KP_j(X)}\rceil.
\end{equation}
Similarly, the transmission cost of sending $ d $ units of data through $ P_j $ is equal to
\begin{equation}
	\beta(d, P_j) = d\times CP_j = d\sum_{r=1}^{nj} c_{j_r}.
\end{equation}
For example, considering $X^\star = (3, 3, 4, 1, 2, 1, 2, 2)$, $L=(2, 2, 3, 1, 2, 2, 3, 1)$ and $ C = (1, 2, 2, 1, 3, 2, 3, 4) $,  the transmission time and cost to send $ d= 10 $ units of data through $ P_1 = \{a_1, a_4, a_7\} $ in Fig.~\ref{fig3:benchmark} under SSV of $X^\star$ are $ \xi(10, X, P_1) = LP_1+ \lceil \frac{10}{KP_1(X^\star)}\rceil = 6+\lceil 10/1\rceil = 16$ and $ \beta(d, P_1) =10CP_1= 50$.
It is obvious that if $ \beta(d, P_j)>b $, for $ j=1, 2, \cdots, q $, one cannot transmit $ d $ units of data through a single MP from the source to the destination. Therefore, in this work, we assume that $ \min\{\beta(d, P_j)\ |\ j=1, \cdots, q\}\leq b $.

\begin{lemma}\label{lem: trans-time-MP}
	Assume that $ X $ and $ Y $ are two SSVs such that $ X\leq Y $. Then, $ \xi(d, X, P_j)\geq \xi(d,Y, P_j) $, for any demand level of $ d $ and any $ j=1, 2, \cdots, q $.	
\end{lemma}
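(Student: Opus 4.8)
The plan is to reduce the claim to the monotonicity of the path capacity together with the monotonicity of the ceiling function, so that the comparison of two transmission times collapses to a one-line inequality. First I would invoke the defining formula~\eqref{path-lead-time-formula}, namely $\xi(d,X,P_j)=LP_j+\lceil d/KP_j(X)\rceil$, and observe that the lead time $LP_j=\sum_{r=1}^{nj} l_{j_r}$ depends only on the arcs of $P_j$ and not on the current capacities. Hence the additive term $LP_j$ is identical in $\xi(d,X,P_j)$ and $\xi(d,Y,P_j)$, and it suffices to prove the single inequality $\lceil d/KP_j(X)\rceil \ge \lceil d/KP_j(Y)\rceil$.

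Next I would establish $KP_j(X)\le KP_j(Y)$. Writing $P_j=\{a_{j_1}, a_{j_2}, \dots, a_{j_{nj}}\}$, the hypothesis $X\le Y$ from the Nomenclature gives $x_{j_r}\le y_{j_r}$ for every arc $a_{j_r}\in P_j$. Taking the minimum over the same index set preserves the inequality, so $KP_j(X)=\min_r x_{j_r}\le \min_r y_{j_r}=KP_j(Y)$. This is the only place where the assumption $X\le Y$ is used.

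Finally I would close the argument by monotonicity: for fixed positive demand $d$, the map $k\mapsto d/k$ is nonincreasing in $k$, and $\lceil\cdot\rceil$ is nondecreasing, so from $0<KP_j(X)\le KP_j(Y)$ we obtain $\lceil d/KP_j(X)\rceil\ge\lceil d/KP_j(Y)\rceil$. Adding the common term $LP_j$ to both sides yields $\xi(d,X,P_j)\ge\xi(d,Y,P_j)$ for every demand level $d$ and every $j=1,2,\dots,q$, which is exactly the claim.

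The only real subtlety, and the single step I would flag, is the degenerate case $KP_j(X)=0$, in which a saturated or broken path carries no data and the quotient $d/KP_j(X)$ is undefined. I would dispose of this by adopting the natural convention $\xi(d,X,P_j)=\infty$ whenever $KP_j(X)=0$; under this convention the desired inequality $\xi(d,X,P_j)\ge\xi(d,Y,P_j)$ holds trivially, and stating it explicitly also guarantees that the ceiling expression is well defined throughout, so that no case distinction is needed in the main chain of inequalities above.
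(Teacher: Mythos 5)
Your argument is correct and follows essentially the same route as the paper's proof: deduce $KP_j(X)\leq KP_j(Y)$ from $X\leq Y$ and then apply the monotonicity of $k\mapsto LP_j+\lceil d/k\rceil$ via Eq.~\eqref{path-lead-time-formula}. Your explicit handling of the degenerate case $KP_j(X)=0$ via the convention $\xi(d,X,P_j)=\infty$ is a small but worthwhile addition that the paper leaves implicit.
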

\begin{proof}
	Let $ d $ be any given demand level and $ P_j $ any MP in the network. As $ X\leq Y $, one has $ KP_j(X)\leq KP_j(Y) $, and  according to Eq.~\eqref{path-lead-time-formula} it is concluded that $ \xi(d, X, P_j)\geq \xi(d,Y, P_j) $. 
\end{proof}
On the one hand, for sending $ d $ units of data from the source to the destination, the transmission time depends on the current SSV while the transmission cost does not. 
If for an SSV of $ X $ and an MP, say $ P_j $, we have $ \xi(d, X, P_j)\leq T $ but $ d\times CP_j>b $, then it is vivid that one cannot transmit $ d $ units of data through $ P_j $ within budget of $ b $. On the other hand, the data should be sent through a single MP, and that it is always preferable to have the minimum possible transmission time. Therefore, one can integrate the transmission time and cost and define 
the following function on SSV $ X $ related to transmitting $ d $ units of data from the source node to the destination node through a single MP.
\begin{equation}\label{eq: transmission-time}\small
	\Xi(d, X) = \min\{\xi(d,X, P_j)\ |\ \beta(d, P_j)\leq b,\ \text{ for }  j = 1, 2, \cdots, q\}
\end{equation}
Now, one observes that for any SSV of $ X $ with $ \Xi(d, X)\leq T $, there is at least one MP through which $ d $ units of data can be transmitted within $ T $ units of time and budget of $ b $.
\begin{theorem}\label{theorem: trans-time}
	If $ X $ and $ Y $ are two SSVs such that $ X\leq Y $, then, $ \Xi(d, X)\geq \Xi(d,Y) $.
\end{theorem}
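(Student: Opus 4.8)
The plan is to reduce the statement to the per-path monotonicity already established in Lemma~\ref{lem: trans-time-MP}, exploiting the crucial fact that the budget constraint which selects the admissible MPs in the definition of $\Xi$ does not depend on the current SSV. Once that observation is made, the theorem follows from the elementary fact that taking a minimum over a fixed index set preserves term-by-term inequalities.

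First I would note that the transmission cost $\beta(d, P_j) = d \times CP_j$ depends only on the demand $d$ and the path $P_j$, and is entirely independent of the state vector. Consequently, the set of admissible MPs, namely $J = \{\, j : \beta(d, P_j) \leq b,\ 1 \leq j \leq q \,\}$, is exactly the same whether one is computing $\Xi(d, X)$ or $\Xi(d, Y)$. This is the decisive point: both minima appearing in Eq.~\eqref{eq: transmission-time} range over precisely the same collection of paths, so no path enters the competition for one state while being excluded for the other.

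Next, I would fix an arbitrary $j \in J$ and invoke Lemma~\ref{lem: trans-time-MP}: since $X \leq Y$, it gives $\xi(d, X, P_j) \geq \xi(d, Y, P_j)$. Because this holds for every index in the common set $J$, each member of the finite family $\{\xi(d, X, P_j)\}_{j \in J}$ dominates the corresponding member of $\{\xi(d, Y, P_j)\}_{j \in J}$. Applying the monotonicity of the minimum over a fixed index set then yields $\Xi(d, X) = \min_{j \in J} \xi(d, X, P_j) \geq \min_{j \in J} \xi(d, Y, P_j) = \Xi(d, Y)$, which is the claimed inequality.

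The main obstacle here is conceptual rather than computational: one must recognize that the cost-feasibility filter is state-independent, so that the two minimizations are genuinely taken over the same index set and the per-path inequality of Lemma~\ref{lem: trans-time-MP} passes cleanly to the minimum. If the admissible set could shrink or grow with the state, the minimum-monotonicity step would no longer be valid, and a more delicate comparison would be required. Since that complication does not arise, the argument is short and clean.
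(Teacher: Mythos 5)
Your proof is correct and takes essentially the same route as the paper, which simply cites Lemma~\ref{lem: trans-time-MP} and Eq.~\eqref{eq: transmission-time}; you have merely filled in the details, in particular the (worth-noting) observation that the budget filter is state-independent so both minima range over the same index set.
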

\begin{proof}
	It is a direct result of Lemma~\ref{lem: trans-time-MP} and Eq.~\eqref{eq: transmission-time}.
\end{proof}

We recall that the main aim in this work is to assess the performance of a network, that is, to compute the exact network reliability, which is denoted by \rdts and is the probability of transmitting $ d $ units of data from a source node to a destination node through a single MP within the time of $ T $ and budget of $ b $. 
Letting $ \Omega_d = \{X\leq M\ |\ \Xi(d,X)\leq T\} $, one observes that \rdt $= \Pr\{X\ |\ X\in \Omega_d\} $. Now, let $ \Omega_{d, \min} = \{X^1, X^2, \cdots, X^{\sigma}\} $ be the set of all the minimal vectors in $ \Omega_d  $ and  $E_r=\{X|X\geq X^r\}$, for $r=1,2,\cdots,\sigma$. From Theorem~\ref{theorem: trans-time}, one sees that $\Omega_d =\cup_{1\leq r\leq \lambda}E_r$. Therefore, the system reliability can be computed by using some exact methods such as the inclusion-exclusion principle~\cite{forghani2014ieee} and the sum of disjoint products~\cite{balan2003preprocessing},~\cite{forghani2021chapter}. For example, using the inclusion-exclusion method, we have

\begin{multline}\label{eq: reliability}
\hspace{-3mm}	R_{(d,T, b)}=\Pr(\cup_{r=1}^\sigma E_r)=\sum^\sigma_{r=1}\Pr(E_r)-\sum^\sigma_{j=2}\sum^{j-1}_{r=1}\Pr(E_r\cap E_j)\\
	+...+(-1)^{\sigma+1}\Pr(\cap^\sigma_{r=1}E_r), 
\end{multline}

where 
\begin{align*}
	\Pr(E_r)=\sum_{X\in E_r} \Pr(X),\ \textnormal{ and }\Pr(X)= \prod^m_{i=1}\Pr(x_i).
\end{align*}

Now that we know how to compute the network reliability from the set of $ \Omega_d $, the rest is to determine such a set discussed in the next section.

\section{The proposed algorithm}\label{sec-algorithm}

We start the discussion here with two definitions.
\begin{definition}\label{dtb-can}
	An SSV of $ X $ is called a \dtbs candidate if $ \Xi(d, X)\leq T $ (see Eq.~\eqref{eq: transmission-time}).
\end{definition}

\begin{definition}\label{dtb-real}
	An SSV of $ X $ is called a (real) \dtbs  if $ \Xi(d, X)\leq T $ and $ \Xi(d, Y)>T $ for any SSV of $ Y<X $.
\end{definition}

\begin{lemma}\label{lem: minimal-candidates}
	Every (real) \dtbs is a candidate, and if $ \Omega_d $ is the set of all the \dtbs candidate, then  $ \Omega_{d, \min} $ is the set of all the \dtb s.
\end{lemma}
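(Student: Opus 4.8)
The plan is to treat the statement as two separate claims: first, that every real \dtbs belongs to the candidate set; and second, that the minimal vectors of $\Omega_d$ are exactly the real \dtb s. The first claim is immediate from the definitions, since by Definition~\ref{dtb-real} a real \dtbs $X$ satisfies $\Xi(d,X)\leq T$, which is precisely the defining condition of a candidate in Definition~\ref{dtb-can}; hence every real \dtbs is a candidate and lies in $\Omega_d$. I would dispatch this in one line before turning to the substantive part.

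For the second claim I would argue by double inclusion, keeping in mind the key observation that every SSV $Y$ automatically satisfies $0\leq Y\leq M$, so that for such a $Y$ one has $Y\in\Omega_d$ if and only if $\Xi(d,Y)\leq T$. First I would show that every $X\in\Omega_{d,\min}$ is a real \dtb. Since $X\in\Omega_d$ we have $\Xi(d,X)\leq T$. Minimality of $X$ in $\Omega_d$ means there is no $Y\in\Omega_d$ with $Y<X$; using the observation above, this says precisely that every SSV $Y<X$ fails the candidate condition, i.e. $\Xi(d,Y)>T$. These two facts together are exactly Definition~\ref{dtb-real}, so $X$ is a real \dtb.

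For the reverse inclusion I would take a real \dtbs $X$ and verify it is a minimal vector of $\Omega_d$. By the first claim $X\in\Omega_d$, and by Definition~\ref{dtb-real} every SSV $Y<X$ has $\Xi(d,Y)>T$ and hence $Y\notin\Omega_d$; therefore no vector of $\Omega_d$ lies strictly below $X$, which is exactly $X\in\Omega_{d,\min}$. Combining the two inclusions yields that $\Omega_{d,\min}$ coincides with the set of all real \dtb s.

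The only genuinely delicate point — and the one I would take care to spell out rather than leave implicit — is the equivalence between the minimality condition, which quantifies over $Y\in\Omega_d$, and the real-\dtbs condition, which quantifies over all SSVs $Y<X$. These match only because an SSV already satisfies $Y\leq M$, so that membership $Y\in\Omega_d$ collapses to the single inequality $\Xi(d,Y)\leq T$. I would note that the monotonicity result of Theorem~\ref{theorem: trans-time} is not actually required for this equality; it is instead what later guarantees that the minimal set $\Omega_{d,\min}$ recovers all of $\Omega_d$ through the union of the $E_r$.
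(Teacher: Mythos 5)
Your proof is correct and follows essentially the same route as the paper's: the first claim is dispatched directly from Definitions~\ref{dtb-can} and~\ref{dtb-real}, and the second is a double inclusion matching the paper's two-directional argument (the paper phrases the forward direction as a contradiction, you phrase it directly, but the content is identical). Your explicit remark that an SSV $Y<X$ automatically satisfies $Y\leq M$, so that membership in $\Omega_d$ reduces to the single condition $\Xi(d,Y)\leq T$, is a point the paper leaves implicit; making it explicit is a small improvement, not a different approach.
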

\begin{proof}
	The first part is clear. Let $X\in \Omega_{d, \min} $. Assume that $ Y<X $ is an SSV. If $ \Xi(d, Y)\leq T $, then $ Y\in \Omega_d $ which contradicts with $ X $ being in $ \Omega_{d, \min} $. Hence, $ \Xi(d, Y)>T $, and so $ X $ is a (real) \dtb. Now, let $ X $ be a \dtb. Thus, there is no $ Y<X $ with $ \Xi(d, Y)\leq T $, and so $ X\in \Omega_{d, \min} $.
\end{proof}

This lemma shows that one needs to find the minimal vectors among all the \dtbs candidates. An elementary approach is to find all the candidates and then determine the minimal vectors by comparing all the candidates. However, both finding the candidates and comparing them are very expensive and time-consuming operations. Assume that the data is transmitting through $ P_j $. In such a case, one can set the capacity of every arc which does not belong to $ P_j $ to zero.
Since the data is transmitting through only the arcs in $ P_j $, it does not affect the transmission cost or time. However, it decreases the current SSV. As we seek the minimal candidates, we need to find out the minimum possible capacity for each MP by which $ d $ units of data can be transmitted through the MP within $ T $ units of time and budget of $ b $. As the transmission cost is independent of the SSVs, one can calculate the transmission cost of all the MPs and remove every MP whose transmission cost exceeds $ b $. This way, there is no need to check the budget constraint anymore.
Moreover, if the lead time of an MP is greater than or equal to $ T $, then its transmission time for sending any amount of data will exceeds $ T $. Therefore, one can remove the MPs whose lead times are not less than $ T $. After removing these MPs, the following lemma provides the minimum possible capacity for each MP.

\begin{lemma}\label{lem: smallest-capacity}
	Let $ P_j $ be an MP such that $ CP_j\leq \frac bd $ and $ LP_j<T $. Then, $ \alpha_j = \lceil \frac{d}{T -LP_j}\rceil $ is the minimum possible capacity for $ P_j $ such that $ d $ units of data can be transmitted through it within $ T $ units of time and budget of $ b $. 
\end{lemma}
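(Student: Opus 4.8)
The plan is to reduce the claim to an arithmetic inequality for the ceiling function, using the two hypotheses to dispose of the budget constraint and isolate the time constraint. First I would note that the transmission cost $\beta(d,P_j)=d\,CP_j$ does not depend on the current SSV, so the hypothesis $CP_j\leq \frac bd$ immediately gives $\beta(d,P_j)\leq b$ for every capacity assignment on the arcs of $P_j$. Hence the budget of $b$ is automatically respected, and feasibility is controlled solely by the time requirement $\xi(d,X,P_j)\leq T$.

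Writing $k:=KP_j(X)$ for the path capacity and substituting into \eqref{path-lead-time-formula}, the time constraint reads $LP_j+\lceil \frac dk\rceil\leq T$, that is, $\lceil \frac dk\rceil\leq T-LP_j$. The hypothesis $LP_j<T$ guarantees that $T-LP_j$ is a positive integer, which is exactly what lets me invoke the elementary fact that for a positive integer $n$ and a positive real $x$ one has $\lceil x\rceil\leq n$ if and only if $x\leq n$. Applying it with $n=T-LP_j$ and $x=\frac dk$ converts the constraint into $\frac dk\leq T-LP_j$, equivalently $k\geq \frac{d}{T-LP_j}$.

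Since a capacity is a nonnegative integer, the smallest $k$ satisfying $k\geq \frac{d}{T-LP_j}$ is $\lceil \frac{d}{T-LP_j}\rceil=\alpha_j$, and any integer capacity strictly below $\alpha_j$ violates this inequality and therefore forces $\xi(d,X,P_j)>T$; this establishes both the feasibility of $\alpha_j$ and its minimality. I expect the only point requiring care to be the ceiling equivalence together with the integrality of $T-LP_j$ supplied by $LP_j<T$; once these are settled the remainder is a direct computation.
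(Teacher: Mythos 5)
Your proof is correct and follows essentially the same route as the paper's: both arguments hinge on the hypothesis $CP_j\leq \frac bd$ making the budget constraint automatic and on the integrality of $T-LP_j$ (from $LP_j<T$) to handle the ceiling, with your single equivalence chain $\lceil d/k\rceil\leq T-LP_j \iff k\geq \frac{d}{T-LP_j}$ merely packaging the paper's two separate verifications (feasibility at $\alpha_j$, infeasibility below it) into one step. Incidentally, your formulation also avoids a sign slip in the paper's minimality half, where the infeasible case is stated as $\alpha>\alpha_j$ but the displayed inequality only holds for $\alpha<\alpha_j$.
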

\begin{proof}
	Assume that $ X $ is an SSV such that $ KP_j(X)= \alpha_j = \lceil \frac{d}{T -LP_j}\rceil$. Then, we have 
	\begin{equation*}
		\xi(d, X, P_j) = LP_j+\lceil \frac{d}{\alpha_j}\rceil\leq LP_j+T-LP_j=T.
	\end{equation*}
	Hence, as $ CP_j\leq \frac bd $, one can send $ d $ units of data through this MP within time $ T $ and budget $ b $. Now, assume that $ Y $ is another SSV such that $ KP_j(Y)=\alpha>\alpha_j $. In this case, we have 
	\begin{equation*}
		\xi(d, Y, P_j) = LP_j+\lceil \frac{d}{\alpha}\rceil> LP_j+T-LP_j=T.
	\end{equation*}
	Thus, it is not possible to do the same under $ Y $ which completes the proof.
\end{proof}
As a results of this lemma and according to the definition of the capacity of an MP, one can set a vector corresponding to every MP as follows.\\

\subsubsection*{Setting the minimum SSV corresponding to an MP:} Suppose that $ P_j $ is an MP such that $ CP_j\leq \frac bd $ and $ LP_j<T $ and let $ \alpha_j = \lceil \frac{d}{T -LP_j}\rceil $. If $ \alpha_j\leq KP_j(M) $, then set the vector $ X_{P_j}=(x_1, x_2, \cdots, x_m) $ as follows.
\begin{align}
	\textnormal{For } i = 1, 2, \cdots, m, \ x_i = 
	\begin{cases}
		\alpha_j & \textnormal{if }\ a_i\in P_j,\cr
		0 & \textnormal{if }\ a_i\notin P_j.
	\end{cases}
\end{align}
The first condition to check is whether this vector is less than or equal to $ M $. In other words, if $ \alpha_j>KP_j(M)$, then it is impossible to have such a vector as an SSV for the network. It is why the condition of $ \alpha_j\leq KP_j(M) $ is necessary for the above approach. 
According to Lemma~\ref{lem: smallest-capacity}, one observes that $ \xi(d, X_{P_j}, P_j)\leq T $ and as $ CP_j\leq \frac bd $, it is deduced that $ \Xi(d, X_{P_j})\leq T $, and hence $ X_{P_j} $ is a \dtbs candidate. 
As the capacity of any arc not belonging to $ P_j $ is zero in $ X_{P_j} $, one sees that the capacity of $ P_j $ under any SSV, $ Y<X $, is less than $ \alpha_j $, and accordingly $ \Xi(d, Y)>T $. It shows that $ X_{P_j} $ is a minimal candidate, and thus it is a (real) \dtbs according to Lemma~\ref{lem: minimal-candidates}. As a result, one can determine the corresponding \dtb s to all the MPs which satisfy the budget constraint directly without generating any extra candidates. Now, we are at the point to state the proposed algorithm.\\

\noindent\textbf{Algorithm~1} (assessing the reliability of smart grid communication network under time and budget constraints)

\noindent\textbf{Input:} A smart grid communication network $ G(N, A, M, L, C) $ with a demand level of $ d $, time limit of $ T $, budget limit of $ b $, and all the MPs, say $ P_1, P_2, \cdots, P_q $.

\noindent\textbf{Output:} The system reliability, that is, \rdt.\\

\noindent\textbf{Step~1.} Calculate $ CP_j = \sum_{a_r\in P_j} c_{r} $ and $ LP_j =\sum_{a_r\in P_j} l_{r}$, for $ j=1, 2, \cdots, q $. Let $ J = [j\ |\ CP_j\leq \frac bd\ \&\ LP_j< T] $ and $ |J|=k $, that is, $ k $ is the number of MPs which satisfy the conditions.


\noindent\textbf{Step~2.} For each $ r=1, 2, \cdots, k $, let $ j = J(r) $, that is, the $ r $th index in $ J $, and $ \alpha_j = \lceil \frac{d}{T -LP_j}\rceil $. If $ \alpha_j\leq KP_j(M) $, then set $ X_{P_j}=(x_1, x_2, \cdots, x_m) $ as follows.
\begin{align*}
	\textnormal{For } i = 1, 2, \cdots, m, \ x_i = 
	\begin{cases}
		\alpha_j & \textnormal{if }\ a_i\in P_j,\cr
		0 & \textnormal{if }\ a_i\notin P_j.
	\end{cases}
\end{align*}

\noindent\textbf{Step~3.} Assume that $ \sigma\leq k $ is the number of obtained \dtbs in Step~2. If $ \sigma =0 $, then \rdt$ =0 $, otherwise let $E_j=\{X|X\geq X_{P_j}\}$, for $j=1,2,\cdots,\sigma$, and compute the system reliability by using the inclusion-exclusion technique through Eq.~\eqref{eq: reliability}.\\

According to the discussions in this section and the preceding ones, the following theorem is at hand.
\begin{theorem}
	Algorithm~1 evaluates the reliability of the given smart grid communication network correctly without generating any extra candidates.
\end{theorem}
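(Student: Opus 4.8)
The plan is to reduce the whole statement to a single set equality: writing $S=\{X_{P_j}: j \text{ is kept in Step~2}\}$ for the family of vectors the algorithm actually records, I will show $S=\Omega_{d,\min}$. Once this is in hand the correctness of the returned value is automatic, because the discussion preceding Eq.~\eqref{eq: reliability} already established, via Theorem~\ref{theorem: trans-time}, that $\Omega_d=\cup_r E_r$ with the union ranging over the minimal vectors; hence the quantity $\Pr(\cup_r E_r)$ --- which is precisely \rdt --- is exactly what Step~3 computes through the inclusion--exclusion formula. So the entire content of the theorem reduces to the two inclusions $S\subseteq\Omega_{d,\min}$ (soundness) and $\Omega_{d,\min}\subseteq S$ (completeness), after which the clause ``without generating any extra candidates'' follows by noting that the algorithm emits at most one vector per MP and, by soundness, every emitted vector is already minimal, so no non-minimal candidate is ever produced.

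For soundness I would fix an emitted $X_{P_j}$ and verify it is a \dtbs in the sense of Definition~\ref{dtb-real}. That it is a candidate (Definition~\ref{dtb-can}) is immediate from Lemma~\ref{lem: smallest-capacity} together with $CP_j\le b/d$, exactly as argued under ``Setting the minimum SSV''. The delicate part is the minimality clause $\Xi(d,Y)>T$ for every $Y<X_{P_j}$: here I must not test $P_j$ alone, since $\Xi$ minimises over all budget-feasible MPs. The key observation is that $X_{P_j}$ assigns capacity $0$ to every arc outside $P_j$, so under any $Y\le X_{P_j}$ each MP using such an arc has capacity $0$ and hence infinite transmission time; and because $P_j$ is a minimal path, no other MP is contained in $P_j$, so $P_j$ is the unique MP capable of carrying flow under $Y$. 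Then $Y<X_{P_j}$ forces $KP_j(Y)<\alpha_j$, whence $\xi(d,Y,P_j)>T$ by the second half of Lemma~\ref{lem: smallest-capacity}, giving $\Xi(d,Y)>T$.

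For completeness I would take an arbitrary $X^\star\in\Omega_{d,\min}$, which is a real \dtbs by Lemma~\ref{lem: minimal-candidates}, and recover the MP that generates it. From $\Xi(d,X^\star)\le T$ and Eq.~\eqref{eq: transmission-time} there is an MP $P_j$ with $CP_j\le b/d$ and $\xi(d,X^\star,P_j)=LP_j+\lceil d/KP_j(X^\star)\rceil\le T$; since $d\ge 1$ this forces $KP_j(X^\star)\ge 1$ and $LP_j<T$, so $P_j$ survives Step~1, and it also forces $KP_j(X^\star)\ge\alpha_j=\lceil d/(T-LP_j)\rceil$. In particular $\alpha_j\le KP_j(X^\star)\le KP_j(M)$, so Step~2 indeed emits $X_{P_j}$. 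Finally $X_{P_j}\le X^\star$ componentwise (value $\alpha_j\le KP_j(X^\star)\le x^\star_i$ on arcs of $P_j$, and $0$ elsewhere), and since $X^\star$ is minimal in $\Omega_d$ while $X_{P_j}$ is itself a candidate, the strict inequality $X_{P_j}<X^\star$ is impossible; hence $X_{P_j}=X^\star\in S$.

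I expect the soundness minimality step to be the main obstacle, precisely because $\Xi$ is a minimum over all budget-feasible MPs rather than over $P_j$ only: one must rule out that shrinking $X_{P_j}$ still leaves some other path able to deliver the data on time. The zero-padding outside $P_j$ combined with the minimality of $P_j$ --- so that $P_j$ is the sole MP lying in its own support --- is exactly what closes this gap, and I would isolate that observation as the pivotal remark of the argument.
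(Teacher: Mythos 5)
Your proof is correct, and it is substantially more complete than what the paper offers: the paper gives no formal proof of this theorem at all, merely asserting it ``according to the discussions'' preceding it, namely Lemma~\ref{lem: smallest-capacity} and the paragraph on setting $X_{P_j}$, which together establish only your soundness direction. Two of your steps are genuine additions. First, the completeness inclusion $\Omega_{d,\min}\subseteq S$ --- recovering, for an arbitrary minimal vector $X^\star$, the surviving MP $P_j$ with $\alpha_j\le KP_j(X^\star)\le KP_j(M)$ and concluding $X_{P_j}=X^\star$ by minimality --- is nowhere argued in the paper, yet it is exactly what is needed for $\Omega_d=\cup_r E_r$ to hold over the \emph{emitted} vectors and hence for Eq.~\eqref{eq: reliability} to return \rdts rather than an underestimate. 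Second, in the soundness direction the paper jumps from ``$KP_j(Y)<\alpha_j$ for $Y<X_{P_j}$'' to ``$\Xi(d,Y)>T$'', which is a gap since $\Xi$ in Eq.~\eqref{eq: transmission-time} minimises over all budget-feasible MPs; your observation that the zero-padding outside $P_j$ kills every other MP because no MP can be a proper subset of the minimal path $P_j$ is precisely the missing justification, and you are right to flag it as the pivotal remark. Two cosmetic points only: Algorithm~1 has no filtering in Step~2 (the filter is the index set $J$ of Step~1, so $S$ should be indexed by $j\in J$ with $\alpha_j\le KP_j(M)$), and your appeal to ``the second half of Lemma~\ref{lem: smallest-capacity}'' silently corrects the paper's statement there, which writes $\alpha>\alpha_j$ where $\alpha<\alpha_j$ is clearly intended; your usage reflects the intended content.
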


\section{An illustrative example and complexity results}\label{sec-example-complexity}
\subsubsection{ A descriptive example} Assume that the corresponding communication network in a smart grid is the one given in Fig.~\ref{fig3:benchmark} with the provided arc data in Table~\ref{tab1-forghani} which includes the lead time, transmission cost, and probability distribution for the arcs' capacities. The administrator needs to know the reliability of this communication network for transmitting $ d=10 $ units of data from node $ 1 $ to node $ 5 $ within $ T=8 $ units of time and budget of $ b = 50 $. We use Algorithm~1 for this aim. From Table~\ref{tab1-forghani}, we have  $M=($5, 3, 4, 3, 2, 4, 5, 3$)$, $L=($2, 2, 3, 1, 2, 2, 3, 1$)$ and $ C = ($1, 2, 2, 1, 3, 2, 3, 4$) $.

\begin{table*}[t]
	\centering
	\caption{The arc data for Fig.~\ref{fig3:benchmark}.}\label{tab1-forghani}
	\begin{small}
		\begin{tabular}{lcccclcccc}
			\toprule
			Arc&Lead&Cost&Capacity&Probability&Arc&Lead&Cost&Capacity&Probability\\
			&time&&&&&time&&&\\
			\midrule			
			&&&$ 5 $&$ 0.7 $&&&&$ 2 $&$ 0.85 $\\
			&&&$ 4 $&$ 0.1 $&$ a_5 $&$ 2 $&$ 3 $&$ 1 $&$ 0.1 $\\
			$a_1$&$ 2 $&$ 1 $&$ 3 $&$ 0.05 $&&&&$ 0 $&$ 0.05 $\\
			&&&$ 2 $&$ 0.05 $&&&&&\\
			&&&$ 1 $&$ 0.05 $&&&&$ 4 $&$ 0.7 $\\
			&&&$ 0 $&$ 0.05 $&&&&$ 3 $&$ 0.1 $\\	
			&&&&&$ a_6 $&$ 2 $&$ 2 $&$ 2 $&$ 0.1 $\\	
			&&&$ 3 $&$ 0.8 $&&&&$ 1 $&$ 0.05 $\\
			$ a_2 $&$ 2 $&$ 2 $&$ 2 $&$ 0.1 $&&&&$ 0 $&$ 0.05 $\\
			&&&$ 1 $&$ 0.05 $&&&&&\\
			&&&$ 0 $&$ 0.05 $&&&&$ 5 $&$ 0.7 $\\
			&&&&&&&&$ 4 $&$ 0.1 $\\
			&&&$ 4 $&$ 0.7 $&$ a_7 $&$ 3 $&$ 3 $&$ 3 $&$ 0.05 $\\
			&&&$ 3 $&$ 0.1 $&&&&$ 2 $&$ 0.05 $\\
			$ a_3 $&$ 3 $&$ 2 $&$ 2 $&$ 0.1 $&&&&$ 1 $&$ 0.05 $\\
			&&&$ 1 $&$ 0.05 $&&&&$ 0 $&$ 0.05 $\\
			&&&$ 0 $&$ 0.05 $&&&&&\\
			&&&&&&&&$ 3 $&$ 0.8 $\\
			&&&$ 3 $&$ 0.8 $&$ a_8 $&$ 1 $&$ 4 $&$ 2 $&$ 0.1 $\\
			$ a_4 $&$ 2 $&$ 1 $&$ 2 $&$ 0.1 $&&&&$ 1 $&$ 0.05 $\\
			&&&$ 1 $&$ 0.05 $&&&&$ 0 $&$ 0.05 $\\
			&&&$ 0 $&$ 0.05 $&&&&&\\			
			\bottomrule
		\end{tabular}
	\end{small}
\end{table*}

\noindent\textbf{Step~1.} We calculate $ CP = (CP_1, \cdots, CP_9) =($3, 5, 9, 5, 5, 9, 6, 8, 8$) $ and $ LP=(LP_1, \cdots, LP_9)=($4, 6, 6, 5, 5, 5, 4, 8, 8$) $. As $ \frac bd = 5 $ and $ T=8 $, we have $ J = [1, 2, 4, 5] $ and $ k = 4 $.

\noindent\textbf{Step~2.} We calculate $ \alpha_1 = 3 $, $ \alpha_2 = 5 $, $ \alpha_4 = 4 $, $ \alpha_5 = 4 $, $ KP_1 = 4 $, $ KP_2 = 3 $, $ KP_4 = 3 $ and $ KP_5 = 3 $. As $ \alpha_j>KP_j $, for $ j =$ 2, 4, and 5, then we set only $ X_{P_1} =($3, 0, 0, 0, 0, 3, 0, 0$)$.

\noindent\textbf{Step~3.} As $ \sigma=1 $, we have $ E_1 = \{X\ |\ X\geq ($3, 0, 0, 0, 0, 3, 0, 0$)\} $, and accordingly \rdt$ =0.68 $ which is not so good. For sure if the administrator increase the time or budget limits, the network reliability may increase.\\

This example shows how efficiently the algorithm removes several MPs to lessen the required work. One notes that finding all the \dtbs candidates and removing the non-minimal ones requires much more time than the consumed time in Algorithm~1. It is worthy to see that according to $M=($5, 3, 4, 3, 2, 4, 5, 3$)$, there are 172800 possible SSVs for this network, whereas Algorithm~1 directly found the only existing $ (10, 8, 50) $-$ MP $.

\subsubsection{ Complexity results}
We first recall that $ m $ and $ q $ are respectively the numbers of arcs and MPs.
The time complexity of Step~1 to calculate $ CP_j $ and $ LP_j $, for $ j=1, \cdots, q $, and remove the ones which satisfy the desired conditions is $ O(mq) $. As $ k $ is the number of MPs in Step~2, the time complexity of this step to compute $ \alpha_j $s and set the vectors, if any, is $ O(mk) $. Finally, as $ \sigma $ vectors arrive at Step~3, the time complexity of calculating the union probability in this step is of order of $ O(m\sigma^2) $~\cite{balan2003preprocessing}. As $ \sigma^2>q\ \&\ k $, for large enough networks, the following theorem is at hand.

\begin{theorem}
	The time complexity of Algorithm~1 to find all the \dtb s  is $ O(mq) $ and to assess the reliability of the network is $ O(m\sigma^2) $, where $ m$ and $ q $  are respectively the number of arcs and MPs in the network, and $ \sigma $ is the number of obtained \dtb s in the algorithm. 
\end{theorem}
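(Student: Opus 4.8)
The plan is to bound each of the three steps of Algorithm~1 separately and then combine the bounds, treating the search for the \dtb s (Steps~1 and~2) and the reliability evaluation (Step~3) as the two distinct tasks that the statement itself separates.

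First I would handle Steps~1 and~2 together, since both feed into the $O(mq)$ claim. For a fixed MP $P_j$, both $CP_j=\sum_{a_r\in P_j}c_r$ and $LP_j=\sum_{a_r\in P_j}l_r$ are sums over the arcs of $P_j$; as any path has at most $m$ arcs, each sum costs $O(m)$, and checking the two scalar conditions $CP_j\leq b/d$ and $LP_j<T$ to build the index set $J$ is $O(1)$ per path. Repeating over all $q$ MPs gives $O(mq)$ for Step~1. Step~2 then iterates over the $k=|J|\leq q$ retained paths: for each, $\alpha_j=\lceil d/(T-LP_j)\rceil$ is computed in $O(1)$ (the lead time being already stored), while forming $KP_j(M)=\min_{a_i\in P_j}M_i$ and, when $\alpha_j\leq KP_j(M)$, filling the $m$-dimensional vector $X_{P_j}$ each cost $O(m)$. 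Hence Step~2 is $O(mk)$, which is absorbed into $O(mq)$ since $k\leq q$. Because the \dtb s are produced precisely in Steps~1--2, this establishes the first bound.

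For Step~3 I would invoke the cost of computing $\Pr(\cup_{r=1}^{\sigma}E_r)$ from the $\sigma$ minimal vectors $X_{P_j}$. Appealing to the disjoint-products preprocessing technique of~\cite{balan2003preprocessing}, whose evaluation of the union probability of $\sigma$ such events over an $m$-dimensional state space runs in $O(m\sigma^2)$, yields the second bound. Summing the three contributions gives $O(mq+mk+m\sigma^2)=O(mq+m\sigma^2)$, and in the regime $\sigma^2\gtrsim q$ assumed for large networks the dominating term is $O(m\sigma^2)$, as stated.

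I expect the main obstacle to lie in Step~3 rather than in the two counting steps. The inclusion--exclusion expression in Eq.~\eqref{eq: reliability} as written contains $2^{\sigma}-1$ terms and is exponential in $\sigma$, so the polynomial $O(m\sigma^2)$ bound cannot be read off that raw expansion; it is instead a property of the sum-of-disjoint-products implementation. The delicate point is therefore to state explicitly that the reliability is evaluated by the method of~\cite{balan2003preprocessing} rather than by literally expanding the inclusion--exclusion sum, and to confirm that the per-event work is $O(m)$ and that $O(\sigma^2)$ pairwise comparisons suffice, so that the claimed $O(m\sigma^2)$ genuinely holds.
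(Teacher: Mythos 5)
Your proposal is correct and follows essentially the same route as the paper: the paper likewise bounds Step~1 by $O(mq)$, Step~2 by $O(mk)$ with $k\leq q$, and Step~3 by $O(m\sigma^2)$ via the citation to the sum-of-disjoint-products preprocessing, then absorbs the first two terms under the assumption $\sigma^2>q$ for large networks. Your closing caveat --- that the $O(m\sigma^2)$ bound attaches to the disjoint-products implementation and not to the literal $2^{\sigma}-1$-term inclusion--exclusion expansion written in Eq.~\eqref{eq: reliability} --- is a legitimate point that the paper's own one-paragraph argument silently glosses over.
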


\section{Experimental results}\label{sec-experimental results}
To show the practical efficiency of our proposed algorithm in comparison with the available algorithms in the literature, we compare our algorithm and the one proposed by Lin~\cite{lin2003extend}. It is noted that the algorithm of~\cite{lin2003extend} serves for the cases without the budget limit, and hence we add a new step into the algorithm to  check all the solutions for the budget constraint. For more convenience and readability, we state Lin's algorithm with a newly added step here as Algorithm~2.\\

\noindent\textbf{Algorithm~2} (based on Lin's algorithm~\cite{lin2003extend} for assessing the reliability of MFNs under time and budget constraints)

\noindent\textbf{Input:} An MFN $ G(N, A, M, L, C) $ with a demand level of $ d $, time limit of $ T $, budget limit of $ b $, and all the MPs, say $ P_1, P_2, \cdots, P_q $.

\noindent\textbf{Output:} The system reliability, that is, \rdt.\\

\noindent\textbf{Step~1.} For each $ j=1, 2, \cdots, q $:

\textbf{1.1.} Find the minimum integer-valued number $ v_j $ such that $ LP_j+\lceil \frac{d}{v_j} \rceil\leq T $.

\textbf{1.2.} If $ v_j\leq KP_j(M) $, then set $ X_{P_j}=(x_1, x_2, \cdots, x_m) $ as follows.
\begin{align*}
\textnormal{For } i = 1, 2, \cdots, m, \ x_i = 
\begin{cases}
\alpha_j & \textnormal{if }\ a_i\in P_j,\cr
0 & \textnormal{if }\ a_i\notin P_j.
\end{cases}
\end{align*}

\noindent\textbf{Step~2.} (Newly added step to check the solutions for the budget constraint) Remove every solution $ X_{P_j} $ that satisfies $ d\times CP_j>b $.

\noindent\textbf{Step~3.} Assuming $ \sigma $ as the number of obtained solutions, if $ \sigma>0 $, let  $E_j=\{X|X\geq X_{P_j}\}$, for $j=1,2,\cdots,\sigma$, and compute \rdt $= \Pr(\cup_{r=1}^\sigma E_r) $, else \rdt$ =0 $.\\

As Step~3 in both algorithms~1 and~2 are to compute a union probability by using the available techniques in the literature, we compare the practicality of the algorithms based on their first two steps. Therefore, we use the algorithms to obtain all the SSVs under which $ d $ units of data can be sent from a source to a sink within a time of $ T $ and a budget of $ b $. This way, we compare the algorithms on the taken CPU seconds to solve the test problems. 


To generate the numerical results, both algorithms are implemented in the MATLAB programming environment. We note that for implementing Algorithm~2, we use our presented results of Lemma~\ref{lem: smallest-capacity}, and accordingly the obtained results on this algorithm here are expected to be better than what could have been. 
We employ the Pan European topology, which is a rather large-size benchmark, depicted in Fig.~\ref{fig4:pan-european}, as well as one thousand large-size randomly generated test problems to compare the practical efficiency of the algorithms. All the numerical results were made on a computer with Intel(R) Core(TM) i5-2400S Duo CPU 3.1 GHz, with 8 GB of RAM.

\begin{figure}
	\centering
	\includegraphics[width=\linewidth]{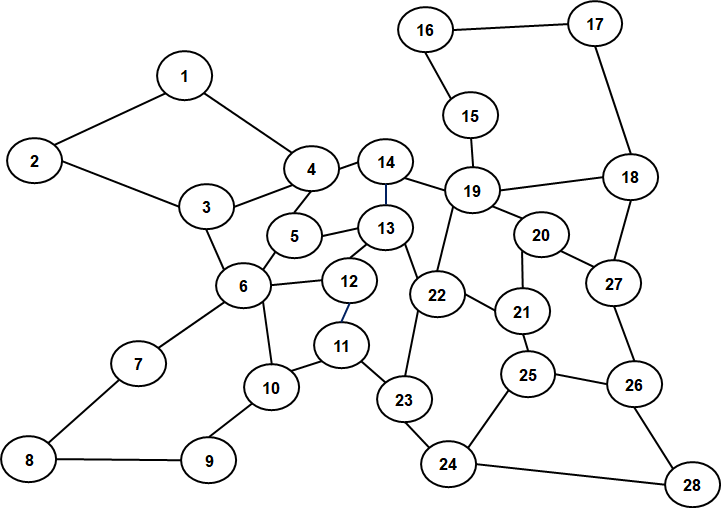}
	\caption{The Pan European network topology.} 
	\label{fig4:pan-european}
\end{figure}
For each network, the arcs' capacities, lead times, and transmission costs take random integer values  in the intervals $ [10, 50] $, $ [5, 10] $, and $ [5, 20] $,  respectively. 
We note that with large enough limits of time and budget, any SSV can be a solution and there is nothing more to be solved by the algorithms. Hence, to have some meaningful limits, we considered 
\begin{align*}
	T = \lceil \frac{LP_1+\cdots+LP_q}{q} \rceil \text{ and } b = \lceil \frac{d(CP_1+\cdots+CP_q)}{q} \rceil
\end{align*}

\noindent as the time and budget limits in each test problem, respectively.

For the Pan European benchmark, we first calculate
\begin{align*}
	d^\star = \lceil \frac{KP_1+\cdots+KP_q}{q} \rceil ,
\end{align*}

 \noindent and then consider all the ten different demand levels of $ d = d^\star - 5, d^\star-4, \cdots, d^\star, \cdots, d^\star+4$. This way, we have ten different cases related to the Pan European network. There are 28 nodes, 40 arcs and 1274 MPs in this network from node $ 1 $ to node $ 28 $ which make it a rather large-size network.  
 Table~\ref{tab:pan-european-forghani} provides the final results obtained on this network. The columns in this table are $ d $, the demand level, $ N_{h,A2} $, number of generated solutions in Algorithm~2 with transmission cost higher than~$ b $, $ N_{r,A1} $, number of removed MPs with unaffordable transmission time or cost in the outset of Algorithm~1, $ N_s $, number of solutions, $ t_1 $, the running time of Algorithm~1, $ t_2 $, the running time of Algorithm~2, and $ t_2/t_1 $ which is time ratio. The given CPU times in this table are in seconds. 
\begin{table}[t]
	\centering
	\caption{THE FINAL RESULTS ON PAN EUROPEAN BENCHMARK GIVEN IN FIG.~\ref{fig4:pan-european}.}\label{tab:pan-european-forghani}
	\begin{small}
		\begin{tabular}{ccccccc}
			\toprule
			$ d $&$ N_s $&$ N_{h, A2} $&$ N_{r,A1} $&$ t_1 $&$ t_2 $&$ t_2/t_1 $\\
			\midrule						
			6&560&714&78&0.0170&0.0208&1.2237\\			
			7&560&714&78&0.0129&0.0190&1.4796\\
			8&560&714&78&0.0127&0.0173&1.3624\\
			9&560&714&78&0.0122&0.0175&1.4306\\
			10&560&714&78&0.0128&0.0176&1.3765\\
			11&551&714&78&0.0122&0.0167&1.3721\\
			12&550&714&78&0.0130&0.0165&1.2756\\
			13&550&714&78&0.0126&0.0166&1.3150\\
			14&550&714&78&0.0122&0.0165&1.3563\\
			15&549&714&78&0.0124&0.0165&1.3246\\
			\midrule
			\multicolumn{4}{c}{Geo. Mean}&0.0130&0.0175&1.3471\\
			\bottomrule
		\end{tabular}
	\end{small}
\end{table}

Table~\ref{tab:pan-european-forghani} shows that Algorithm~1 removes 714 out of 1274 MPs in the outset for each case which is more than half of the MPs. Among these MPs, as seen from the fourth column, only 78 MPs are removed because of the high cost. Algorithm~1 solves all the cases (in average 1.34 times) faster than Algorithm~2 which shows the practical efficiency of our proposed algorithm. 

To have a more meaningful comparison of the algorithms, we use one thousand randomly generated test problems. To this end, we consider $ n = 11, 12, \cdots, 30 $ as the number of nodes and generate $ 50 $ random random networks for each case, making a total of $ 1000 $ test problems. To not have very complicated network, we consider two integer numbers $ f = 3(\lceil n/2 \rceil -1)$ and $ g = 25-\lceil n/2\rceil $, where $ n $ is the number of nodes in the network. Then, the number of arcs in each corresponding generated network take random integer values from the interval $ [f, f+g] $. We use the proposed algorithm in~\cite{forghani2014ieee} to generate the random networks. The arcs' data along with time and budget limits are determined randomly similar to the case of the Pan European network. Moreover, the demand level in each test problem is considered as geometry mean of the MPs' capacities, that is, 
\begin{align*}
	d = \lceil \frac{KP_1+\cdots+KP_q}{q} \rceil.
\end{align*}
Therefore, in some way, we have $ 20 $ cases with $ 50 $ test problems in each case. Table~\ref{tab:rand-networks-forghani} provides the average running times of each algorithm for these cases. In this table, $ n $, $ \tilde{t_1} $, and $ \tilde{t_2} $ denotes the number of nodes and the average running times of Algorithm~1 and 2 on the 50 corresponding networks with each case. The fourth column shows that Algorithm~1 solves  the problems of some cases even on average more than 160 times faster than Algorithm~2. Moreover, as the running times increase the proportional $ \tilde{t_2}/\tilde{t_1} $ increase as well which shows clearly that the practical efficiency of our proposed algorithm in comparison with Algorithm~2 enhances as the problem's size enlarges. 

\begin{table}[t]
	\centering
	\caption{THE RUNNING TIMES IN CPU SECONDS ON RANDOMLY GENERATED NETWORKS.}\label{tab:rand-networks-forghani}
	\begin{small}
		\begin{tabular}{cccc}
			\toprule
			$ n $&$ \tilde{t_1} $&$ \tilde{t_2} $&$ \tilde{t_2}/\tilde{t_1} $\\
			\midrule								
			11&0.0149&0.1121&7.5084\\
			12&0.0087&0.0412&4.7086\\
			13&0.0236&0.3042&12.867\\
			14&0.0222&0.2008&9.0581\\
			15&0.0563&1.4074&25.003\\
			16&0.0357&0.4426&12.412\\
			17&0.0676& 2.761&40.866\\
			18&0.0403&0.6165&15.303\\
			19&0.1634&10.557&64.624\\
			20&0.0406&0.8169&  20.1\\
			21&0.1498&8&53.41\\
			22&0.1027&2.7181&26.454\\
			23&0.3394&30.245&89.111\\
			24&0.1115&3.5526&31.849\\
			25& 0.225&14.608&64.936\\
			26&0.2003&9.7149&48.496\\
			27&0.3763&24.623&65.436\\
			28&0.1833&6.8589&37.421\\
			29&0.6918&111.06&160.54\\
			30&  0.29&14.799&51.031\\
			\bottomrule
		\end{tabular}
	\end{small}
\end{table}

In addition to Table~\ref{tab:rand-networks-forghani}, to see more intuitively, we consider the running times of the algorithms on these one thousand random test problems for producing the performance profile introduced by Dolan and Moré~\cite{dolan2002benchmarking}. In this performance profile, the ratio of the executing times of the algorithms versus the best ones are considered.  
Assuming $t_{i,1}$ and $t_{i,2}$, respectively, as the running times of algorithms~1 and 2, for $i = 1, 2,\cdots, 1000$, the performance ratios are $r_{i,j}~=~\frac{t_{i,j}}{\min\{t_{i,j}:\ j~=~1,~2\}}$, for $ j~=~1,~2$~\cite{dolan2002benchmarking}. 
For each algorithm, the performance  is determined by $Pr_j(\tau)~=~\frac{N_j}{1000}$, where $ N_j $ is the number of test problems for which $ r_{i,j}~\leq~\tau, \ i = 1, 2, \cdots, 1000 $. 
\begin{figure}
	\centering
	\includegraphics[width=\linewidth]{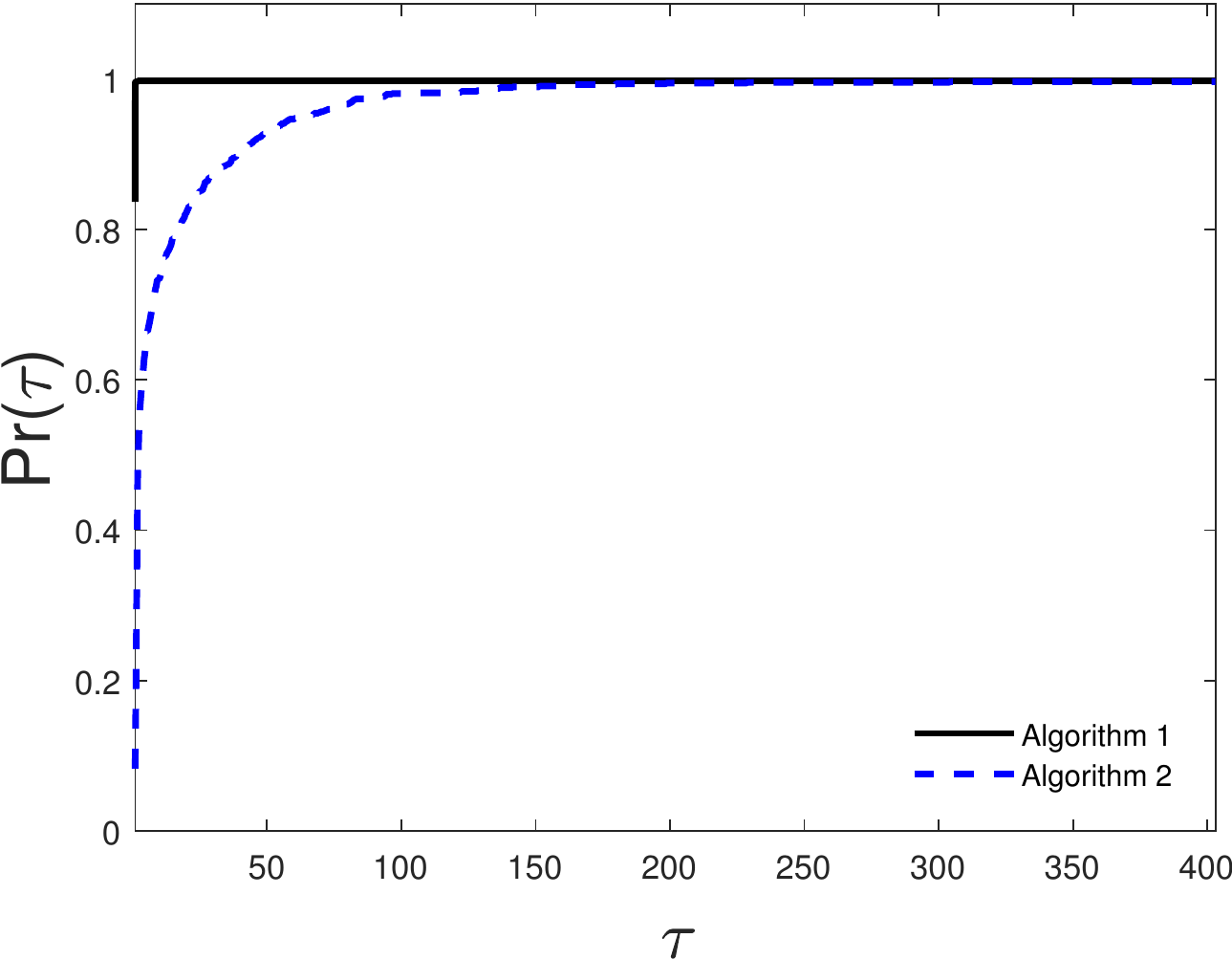}
	\caption{ CPU time performance profiles for Algorithms 1 and 2.} 
	\label{fig5:dolan-more}
\end{figure}
Fig.~\ref{fig5:dolan-more} depicts the result of this performance profile for the algorithms. The figure shows that Algorithm~1 solves all the test problems faster than Algorithm~2. Observing at $ \tau=50 $ in the figure, one can see that around $ 10\% $ of the test problems have been solved more than 50 times faster by Algorithm~1. Moreover, the horizontal axis shows that there exist some problems which have been solved by Algorithm~1 more than 400 times faster. In this profile, the algorithm whose performance diagram lies above the other is preferred~\cite{dolan2002benchmarking}.

All the numerical results including the tables~\ref{tab:pan-european-forghani} and~\ref{tab:rand-networks-forghani} and Fig.~\ref{fig5:dolan-more} show the superiority of Algorithm~1 to Algorithm~2, and that as the network's size grows this superiority increases.

\section{Conclusions}\label{sec-conclusions} 
Communication networks play a crucial role in smart grids, and hence assessing the performance of such networks is of great importance.  Moreover, the time and budget constraints are usually notable challenges in real-world systems. Therefore, taking into account both time and budget limits, to evaluate the performance of such networks, we considered a reliability index, which is the probability of transmitting $ d $ units of data from a source to a destination through a single path within $ T $ units of time and budget of $ b $. Some results were presented based on which an efficient algorithm was proposed to address the problem. The algorithm was illustrated by using a benchmark network example. We also provided the complexity results. Moreover, several experimental results were provided by employing the Pan European topology and one thousand randomly generated networks. The numerical results clearly showed that our proposed algorithm outperforms some available algorithm in the literature, and that the superiority of our algorithm enhances as the network's size grows.

\section*{Acknowledgments}
The second author thanks CNPq (grant 306940/2020-5) for supporting this work.

\ifCLASSOPTIONcaptionsoff
  \newpage
\fi



%
%
%

\bibliographystyle{ieeetr}
\bibliography{Article-Forghani-IEEE_Trans-Reliab}

%

\begin{IEEEbiography}[{\includegraphics[width=1in,height=1.25in,clip,keepaspectratio]{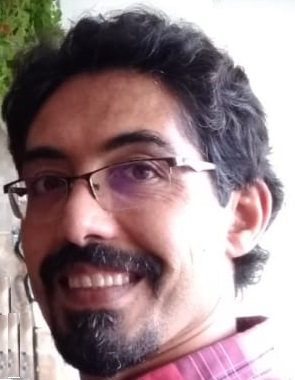}}]{Majid Forghani-elahabad}
is an assistant professor of Applied Mathematics in the Federal University of ABC. He received his B.Sc. and M.S. degrees in Applied Mathematics in 2005 and 2008, respectively, and his Ph.D. degree in Operations Research in 2014. He was a postdoctoral researcher at the University of Sao Paulo and the Federal University of ABC (UFABC) and a lecturer at Sharif University of Technology and several other universities in Iran. He has served as a reviewer of several journals including Reliability Engineering \& System Safety and Computers \& Industrial Engineering. He is currently a researcher of level 2 in Brazil and is working on system reliability, network optimization and optical networks.
\end{IEEEbiography}

%
%




\end{document}